\theoremstyle{plain}
\newtheorem{theorem}{Theorem}[section]
\newtheorem{lemma}[theorem]{Lemma}
\theoremstyle{definition}
\newcommand{\set}[1]{\{#1\}}
\let\leq\leqslant
\let\geq\geqslant
\let\subset\subseteq
\title{Making Triangles Colorful}
\author[Cardinal]{Jean Cardinal}
\author[Knauer]{Kolja Knauer}
\author[Micek]{Piotr Micek}
\author[Ueckerdt]{Torsten Ueckerdt}
\thanks{Research supported by the ESF EUROCORES programme EuroGIGA, CRP ComPoSe ({\tt http://www.eurogiga-compose.eu}), and GraDR ({\tt http://kam.mff.cuni.cz/gradr/}). Jean Cardinal is supported by the F.R.S-FNRS as grant convention no.\ R 70.01.11F within the ComPoSe project. Kolja Knauer is supported by DFG grant FE-340/8-1 as part of the project GraDR. Piotr Micek is supported by the Ministry of Science and Higher Education of Poland as grant no.\ 884/N-ESF-EuroGIGA/10/2011/0 within the GraDR project.}
\address{Universit\'e Libre de Bruxelles\\
Computer Science Department\\
Brussels\\
Belgium
}
\address{Technische Universit\"at Berlin\\
Institut f\"ur Mathematik\\
Berlin\\ 
Germany
}
\address{Theoretical Computer Science Department\\
         Faculty of Mathematics and Computer Science\\
         Jagiellonian University\\
         Krak\'{o}w\\
         Poland}
\address{Department of Mathematics\\
	  Karlsruhe Institute of Technology\\
        Karlsruhe\\
        Germany
}
\email{jcardin@ulb.ac.be}
\email{knauer@math.tu-berlin.de}
\email{Piotr.Micek@tcs.uj.edu.pl}
\email{Torsten.Ueckerdt@kit.edu}
\begin{document}
\sloppy 
\begin{abstract}
We prove that for any point set $P$ in the plane, a triangle $T$, and a positive integer $k$, there exists a
coloring of $P$ with $k$ colors such that any homothetic copy of $T$ containing at least $ck^8$ points of $P$, for some constant $c$, contains at least one of each color. This is the first polynomial bound for range spaces induced by homothetic polygons.
The only previously known bound for this problem applies to the more general case of octants in $\mathbb{R}^3$, but is doubly exponential. 
\end{abstract}

\maketitle

\section{Introduction}


Covering and packing problems are ubiquitous in discrete geometry. In this context, the notion of $\epsilon$-nets
captures the idea of finding a small but representative sample of a data set (see for instance Chapter 10 in Matou\v{s}ek's lectures~\cite{Mat}). Given a set system, or {\em range space}, on $n$ elements, an $\epsilon$-net for this system is a subset of the elements such that any range containing at least a fraction $\epsilon$ of the whole set contains at least one element of the subset.

In this paper, we are interested in {\em coloring} a point set so that any range containing sufficiently many points contains at least
{\em one point of each color}. Hence instead of finding a single subset of representative elements, we wish to partition the point set
into representative classes. 

We are interested in range spaces defined by a (countable or finite) point set $P\subset\mathbb{R}^d$ and a collection $\mathcal B$ of subsets of $\mathbb{R}^d$. In what follows, we are mainly concerned with the case where $\mathcal B$ is a collection of {\em convex bodies}, that is, compact convex subsets of $\mathbb{R}^d$. The corresponding set system is the collection of subsets of $P$, called {\em ranges}, formed by intersecting $P$ with a member of $\mathcal B$. This construction yields so-called {\em primal} range spaces. For instance, if $P$ is a set of points in the plane and $\mathcal B$ the set of all disks, then the ranges are all possible intersections of $P$ with a disk.

One can also consider {\em dual} range spaces, where the ground set is a (countable or finite) collection $\mathcal B$ of subsets of $\mathbb{R}^d$, and the ranges are all subsets of $\mathcal B$ that have a point of $\mathbb{R}^d$ in common. In general, those are also referred to as (primal and dual) {\em geometric hypergraphs}.

For a given family of range spaces, we define the function $p(k)$ as the minimum number $p$ such that the following holds: every point set can be colored with $k$ colors such that any range containing at least $p$ points contains at least one of each color.

It is not difficult to show that if $p(k)=O(k)$ for a family of range spaces, then this family admits $\epsilon$-nets of size $O(1/\epsilon )$ (see the discussion in Pach and Tardos~\cite{PT10c}). In the case of dual range spaces, the problem of bounding $p(k)$ is known as {\em covering decomposition}. In this setting, we are given a collection of objects, and we wish to partition them into $k$ classes, so that whenever a point is contained in sufficiently many objects of the initial collection, it is contained in at least one object of each class.

We prove a polynomial upper bound on $p(k)$ for primal range spaces induced by homothetic triangles in the plane.

\subsection{Previous works.}

These questions were first studied by J{\'a}nos Pach in the early eighties~\cite{Pa80}. An account of early related results and conjectures can be found in Chapter 2 of the survey on open problems in discrete geometry by Brass, Moser, and Pach~\cite{BMP}. 

In the past five years, tremendous progress has been made in this area, for various families of range spaces. One of the most striking 
achievements is the recent proof that $p(k)=O(k)$ for translates of convex polygons, the culmination of a series of intermediate results for various special cases. We refer the reader to Table~\ref{tab:summary} for a summary of the known bounds. 

\begin{table}[!ht]
\footnotesize\center
\begin{tabular}{|c||c|c||c|c|}
\hline

Range spaces      & primal &  dual \\

\hline 
\hline

\multirow{2}{*}{halfplanes} &  \multirow{2}{*}{$p(k)=2k-1$~\cite{ACCLS09,K11,SY10}} &  ${p}(2)=3$~\cite{F10} \\
           &           & ${p}(k)\leq 4k-3$~\cite{ACCLS09,SY10}  \\
 
\hline
\hline

translates of     &  \multicolumn{2}{c||}{\multirow{2}{*}{$p(k)=O(k)$~\cite{TT07,PT10,PT09,ACCLOR10,GV11}}} \\
a convex polygon  &  \multicolumn{2}{c||}{} \\

\hline 

translates of &  \multicolumn{2}{c||}{$p(2)\leq 12$~\cite{KP11}} \\
octant in $\mathbb{R}^3$  & \multicolumn{2}{c||}{$p(k)\leq 12^{2^k}$~\cite{KP12}} \\

\hline 

unit disks & \multicolumn{2}{c||}{$p(2)\leq 33$\footnotemark} \\

\hline
\hline

\multirow{2}{*}{bottomless rectangles} &  $p(2)=4$~\cite{K11} &  ${p}(2)=3$~\cite{K11} \\
                      &  $1.6 k \leq p(k)\leq 3k-2$~\cite{A12} & ${p}(k)\leq 12^{2^k}$~\cite{KP12} (from octants in $\mathbb{R}^3$) \\ 

\hline

axis-aligned & \multirow{2}{*}{$\infty$~\cite{CPST09}} &  \multirow{2}{*}{$\infty$~\cite{PTT05}} \\
  rectangles &  & \\

\hline
\hline

disks and &  \multirow{2}{*}{$\infty$~\cite{PTT05}} &  \multirow{2}{*}{Open} \\
halfspaces in $\mathbb{R}^3$                            &                                         &  \\

\hline
\end{tabular}
\vspace{2ex}
\caption{\label{tab:summary}Known results for various families of range spaces.
For range spaces induced by translates of a set, the primal problem is the same as the dual.
When more than one reference is given, they correspond to successive improvements, but only the best known bound is indicated.  
The symbol $\infty$ indicates that $p(k)$ does not exist. For dual range spaces induced by disks, it is not known
whether $p(k)$ exists.}
\end{table}
\footnotetext{This result has apparently been proved by Mani-Levitska and Pach in 1986, although it seems that the
corresponding unpublished manuscript is now lost.}

The specific case of translates of a triangle with $k=2$ was tackled by Tardos and T{\'oth} in 2007~\cite{TT07}. They proved that every point set can be colored red and blue so that every translate of a given triangle containing at least 43 points contains at least one red and one blue. We generalize this result in two ways: we consider {\em homothetic} triangles, and an arbitrary number of colors.

The only previously known results applying to our problem are due to Keszegh and P{\'a}lv{\"o}lgyi~\cite{KP11,KP12}. They actually apply to the more general case of translates of (say) the positive octant in a cartesian representation of $\mathbb{R}^3$. 
The special case of triangles homothetic to the triangle with vertices $(0,0)$, $(1,0)$ and $(0,1)$ occurs when all points lie on a plane orthogonal to the vector $(1,1,1)$. The bound that was proven for arbitrary $k$ is of the order of $12^{2^k}$, and is most probably far from tight.

\subsection{Our Result.}

\begin{theorem}
\label{th:main}
Given a finite point set $P\subset \mathbb{R}^2$, a triangle $T\subset\mathbb{R}^2$ and a positive integer $k$, there exists a
coloring of $P$ with $k$ colors such that any homothetic copy of $T$ containing at least $144\cdot k^8$ points of $P$ contains
at least one of each color.
\end{theorem}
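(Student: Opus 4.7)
My plan is to prove the theorem by an inductive peeling argument: extract the $k$ color classes one at a time via a \emph{shallow hitting set} whose per-step overhead is only polynomial in $k$. I would begin by normalizing the triangle. Since any triangle is affinely equivalent to any other and affine maps send homothetic copies of $T$ to homothetic copies of the image, I may assume that $T$ has vertices $(0,0)$, $(1,0)$, $(0,1)$; a homothetic copy of $T$ is then described by three parameters $(a,b,c)$ and contains a point $p=(p_x,p_y)$ iff $p_x\ge a$, $p_y\ge b$, and $p_x+p_y\le c$. Via the lift $p\mapsto(p_x,p_y,p_x+p_y)$, the points of $P$ land on the plane $\{X+Y=Z\}\subset\mathbb R^3$ and every homothetic copy of $T$ becomes an axis-parallel octant range; each point carries three natural scalar coordinates, one per edge of $T$, inducing three linear orders I would use throughout the proof.

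The core of the argument would be a shallow hitting set lemma: for every finite $P\subset\mathbb R^2$ and every $m\ge1$, there exists $S\subseteq P$ such that every homothetic copy of $T$ containing at least $C_1 m^{8}$ points of $P$ contains at least one point of $S$, while every homothetic copy of $T$ contains at most $C_2 m^{7}$ points of $S$. Granting this lemma, the theorem would follow by induction on $k$: apply the lemma to the current remaining point set with parameter $m\approx k$, color the extracted set with color $1$, and recurse on its complement with $k-1$ colors. The upper bound on $|S\cap T'|$ guarantees that every triangle rich enough at the outer threshold keeps enough points for the inductive step to apply, and summing the per-step losses of order $k^{7}$ over $k$ classes yields an overall threshold of order $k^{8}$; careful bookkeeping of the constants should fit within $144\,k^{8}=12^{2}\cdot k^{8}$.

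Constructing the shallow hitting set with the right parameters is the main obstacle. A naive greedy choice risks clustering $S$ inside a single triangle and violating the upper bound, while a too-sparse choice misses some rich triangles. My plan is to build $S$ from a staircase-type skeleton in the three edge-induced orders: any two of these scalar orders give a $2$-dimensional dominance poset on $P$, and extremal points along chains in such posets are natural candidates for $S$. The hard part is proving the bound $|S\cap T'|\le C_2 m^{7}$; this should rely crucially on the coplanarity of the $3$-dimensional lift, which is precisely the geometric constraint that separates our setting from the general $\mathbb R^3$ octant coloring problem (for which only the doubly exponential bound $12^{2^k}$ is known) and that should enable a charging or double-counting argument bounding the intersection of any octant with $S$ by a polynomial in $m$.
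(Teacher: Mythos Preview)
Your approach is genuinely different from the paper's, and as written it has a real gap.

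The paper does not peel off one color class at a time. It uses a \emph{doubling} recursion built on the black-box bound $p(2)\le 12$ of Keszegh and P\'alv\"olgyi: the single lemma proved is
\[
p(2)\le c \ \Longrightarrow\ p(2k)\le c^{2}\,p(k)\qquad(k\ge 2),
\]
obtained by taking an optimal $k$-coloring, $2$-coloring each color class separately, and showing---via a diagonal sweep, two applications of pigeonhole, and a covering of a long monochrome-free run by at most $c$ quadrants avoiding the few color-$i$ points below it---that any homothetic triangle with $c^{2}p(k)$ points sees all $2k$ refined colors. Solving the recursion with $c=12$ gives $p(k)\le 144^{\lceil\log_2 k\rceil}<144\,k^{8}$; the exponent $8=\log_2 144$ and the constant $144=12^{2}$ fall directly out of this recursion, and no shallow hitting set is constructed or used anywhere.

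Your plan rests entirely on the shallow hitting set lemma, and that lemma is not proved. The sketch you give (``staircase-type skeleton in the three edge-induced orders'', a hoped-for ``charging or double-counting argument'' exploiting coplanarity of the lift) neither specifies a construction of $S$ nor gives a mechanism for the shallowness bound $|S\cap T'|\le C_2 m^{7}$; since that bound is where all the difficulty lies, the proposal is at this point a reformulation rather than a proof. There is also an internal inconsistency you should notice: your lemma is stated for \emph{every} $m\ge 1$, so taking $m=1$ yields a hitting set for all triangles with at least $C_1$ points whose intersection with every triangle has size at most $C_2$. Peeling with $m=1$ at each step then gives $p(k)\le C_1+(k-1)C_2=O(k)$, strictly stronger than the $O(k^{8})$ you are aiming for. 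Either the lemma is overclaimed, or you are not using it efficiently, or the intended dependence of $C_1,C_2$ on $m$ is not what you wrote. A constant-depth shallow hitting set for homothetic triangles is exactly the kind of statement the paper does \emph{not} have access to; establishing it would be a separate (and harder) result, not a step in the present proof.
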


The proof is elementary, and builds on the previous work by Keszegh and P{\'a}lv{\"o}lgyi~\cite{KP11,KP12}. 
The degree of the polynomial depends on $p(2)$. Hence any improvement on $p(2)$
would yield a polynomial improvement in the bound. For the same reason, it can be shown that the same coloring method cannot 
be used to prove any upper bound better than $O(k^4)$ (as $p(2)\geq 4$).

\section{Proof}

\begin{lemma}
If $p(2)\leq c$, for some constant $c$, then $p(2k)\leq c^2 p(k)$, for all $k\geq2$.
\end{lemma}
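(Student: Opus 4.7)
My plan is a two-stage construction of the $2k$-coloring. First, I would invoke the inductive hypothesis to equip $P$ with a $k$-coloring $\chi \colon P \to [k]$ under which every homothet of $T$ containing at least $p(k)$ points of $P$ meets every class $C_i := \chi^{-1}(i)$. Second, for each $i \in [k]$, I would view $C_i$ as a point set in its own right and apply the base hypothesis $p(2) \leq c$ to it, producing a 2-coloring $\eta_i \colon C_i \to \{0,1\}$ such that every homothet containing at least $c$ points of $C_i$ meets both $\eta_i^{-1}(0)$ and $\eta_i^{-1}(1)$. The final $2k$-coloring of $P$ would then send each point $p$ to the pair $(\chi(p),\,\eta_{\chi(p)}(p))$.

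For a homothet $T'$ with $|T' \cap P| \geq c^2 p(k)$, this $2k$-coloring meets all $2k$ classes in $T'$ precisely when $|T' \cap C_i| \geq c$ for every $i \in [k]$. So the crux is the following multiplicity statement: every such $T'$ must contain at least $c$ points of each class $C_i$, strengthening the mere ``at least one'' that the $p(k)$-property directly delivers.

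To establish multiplicity, I would try to locate inside $T'$ a family $T_1,\ldots,T_c$ of sub-homothets of $T$ whose point sets $T_j \cap P$ are pairwise disjoint and each of size at least $p(k)$. Applying the $p(k)$-property of $\chi$ to each $T_j$ would then place a point of $C_i$ in every $T_j$; pairwise disjointness makes these $c$ points distinct, yielding $|T' \cap C_i| \geq c$ simultaneously for all $i$.

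The hard part will be the geometric construction of these $c$ disjoint sub-homothets. My intended route is an iterative peeling: at each step, shrink from a vertex of the current homothet until it captures exactly $p(k)$ as-yet-unused points of $P$, then mark those points used and recurse on the complement. The factor-$c$ slack between the naive budget $c \cdot p(k)$ and the assumed threshold $c^2 \cdot p(k)$ is what I expect to be needed to survive all $c$ steps of this peeling, exploiting the fact that a homothet of a triangle can always be shrunk from any of its three vertices. This geometric packing step is where I expect the essential difficulty of the proof to concentrate.
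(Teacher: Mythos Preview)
Your two-stage construction and your identification of the crux --- that every homothet $T'$ with $c^2 p(k)$ points must meet each class $C_i$ at least $c$ times --- match the paper exactly. The gap is in how you propose to establish that multiplicity. Your plan is to exhibit, inside any such $T'$, a family of $c$ sub-homothets with pairwise disjoint point sets of size $\geq p(k)$. But the iterative peeling you sketch does not work as stated: once you shrink $T'$ from a corner to capture $p(k)$ points, the complementary region is a trapezoid, not a homothet, so there is no ``current homothet'' to shrink at the next step. If instead you keep shrinking $T'$ itself while counting only unused points, the new homothet will still contain the already-used ones, so the point sets overlap. You offer no mechanism by which the extra factor of $c$ in the budget repairs this, and in fact it is not clear that a coloring-independent decomposition into $c$ disjoint sub-homothets even exists under the bound $c^2 p(k)$.

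The paper sidesteps this entirely by arguing by contradiction and using the deficient color itself. Assuming $T'$ has fewer than $c$ points of color $i$, sweep $T'\cap P$ by the line $x+y=t$; pigeonhole yields a run $Q$ of $c\,p(k)$ consecutive points with no color-$i$ point. The color-$i$ points preceding $Q$ form a set $R$ with $|R|<c$. Because each $q\in Q$ has larger $x+y$ than each $r\in R$, the set $Q$ can be covered by at most $|R|+1\leq c$ translates of the positive quadrant that avoid $R$ (the staircase determined by $R$ has at most $c$ maximal corners). A second pigeonhole places $\geq p(k)$ points of $Q$ in one such quadrant; truncating it by the sweepline through the last point of $Q$ gives a homothet of $T$ with $\geq p(k)$ points and no color-$i$ point, contradicting the property of $\chi$. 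The idea you are missing is exactly this switch: do not seek $c$ disjoint sub-homothets; instead let the fewer-than-$c$ bad points be the obstacles, and use that a set of $<c$ obstacles admits a quadrant cover of size $c$.
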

\begin{proof}
It suffices to prove the lemma for any fixed triangle $T$ and then argue for all others using an affine transformation of the plane. 
Let $T$ be the triangle with vertices $(0,0)$, $(1,0)$ and $(0,1)$. 

Consider a $k$-coloring $\phi$ of $P$ such that any homothetic copy of $T$ containing at least $p(k)$ points contains one of each color. 
Label the colors of $\phi$ by $1,\ldots,k$. 

We now describe a simple procedure to double the number of colors. 
For $1\leq i\leq k$ let $P_i=\phi^{-1}(i)$ that is the set of points with color $i$. 
Provided $p(2)\leq c$ there is a 2-coloring $\phi_i:P_i\to\set{i',i''}$ of $P_i$ such that for any homothetic copy $T'$ of $T$ containing at least $c$ points of $P_i$, $T'$ contains at least one point of each color. 
We claim that $\phi'=\bigsqcup\phi_i$ is a $2k$-coloring of $P$ such that for any homothetic copy $T'$ of $T$ containing at least $c^2p(k)$ points, $T'$ contains at least one point of each of the $2k$ colors. 

Consider a homothetic copy $T'$ of a triangle $T$ containing at least $c^2p(k)$ points from $P$, and in order to get a contradiction suppose that one of the $2k$ colors used by $\phi'$ is missing in $T'$. 
Let $i'$ be this color. Note that if there are at least $c$ points in $T'$ with color $i$ then $i'$ and $i''$ must be present in $T'$, from the correctness of the 2-coloring $\phi_i$. 
Hence we conclude that there are less than $c$ points in $T'$ with color $i$.

Order the points in $T'\cap P=\{p_1,p_2,\ldots \}$ in such a way that the sum of their $x$- and $y$-coordinates is non-decreasing. Hence the order corresponds to a sweep of the points in $T'\cap P$
by a line of slope $-1$. By the pigeonhole principle, since there are less than $c$ points colored with $i$, there must exist a subsequence $Q=(p_j, p_{j+1},\ldots ,p_{j+\ell-1})$ of points of color distinct from $i$, 
of length $\ell := c^2p(k)/c=cp(k)$.

Let $R := P_i\cap \{p_1, p_2,\ldots ,p_{j-1}\}$ be the set of points of color $i$ that come before $Q$ in the sweep order. 
By assumption, we have $|R|<c$. Hence the points of $Q$ can be covered with $c$ translates of the first quadrant, such that none of them intersects $R$. Applying the pigeonhole principle a second time, one of these quadrants must contain at least $|Q|/c = cp(k)/c = p(k)$ points, none of which is colored $i$.
This quadrant, together with the sweepline containing the last point $p_{j+\ell-1}$ of $Q$, forms a triangle that is homothetic to $T$, contains at least $p(k)$
points, none of which has color $i$. This is a contradiction with the correctness of the initial $k$-coloring $\phi$.
\end{proof}

\begin{proof}[Proof of Theorem~\ref{th:main}]
It was shown by Keszegh and P{\'a}lv{\"o}lgyi that $p(2)\leq 12$~\cite{KP11}.
Hence it remains to solve the recurrence of the previous lemma with $c=12$.
We look for an upper bound on $p(k)$ satisfying $p(2k) \leq 144\cdot p(k)$ for any positive integer $k$, and $p(2)\leq 12$.
This yields $p(2^i)\leq 144^i$ for any positive integer $i$, and
$p(k)\leq 144^{\lceil \log_2 k\rceil}<144\cdot k^8$ for any positive integer $k$. 
\end{proof}

\section{Open problems}

The only lower bound on $p(k)$ the authors are aware of is the construction for bottomless rectangles in \cite{A12}, namely $p(k)\geq 1.6k$. 

The dual problem for triangles remains a challenge: what is the minimum number ${p}(k)$ such that for any triangle $T$, any family $\mathcal{T}$ of homothetic copies of $T$ can be $k$-colored in such a way that any point 
covered by at least ${p}(k)$ triangles from $\mathcal{T}$ is covered by at least one triangle of each color? 
Up to now, the only upper bound for this problem is given by the upper bound for octants in $\mathbb{R}^3$ and is doubly exponential in $k$.

Finally, no bound at all is known for the primal range space induced by axis-aligned squares: does there exist a function $p(k)$ such that for any point set $P$ there is a $k$-coloring of $P$ such that any axis-aligned square containing at least $p(k)$ points of $P$ contains at least one point of each color?

\section{Acknowledgments}

This work was carried out during a visit of Kolja, Piotr, and Torsten in Brussels, funded by a EUROCORES Short Term Visit grant, in the framework of the EuroGIGA programme.  
The authors wish to thank Stefan Langerman for insightful discussions. 

\bibliographystyle{plain}
\bibliography{biblio_p(k)_vs_c(k)}

\end{document}